\numberwithin{equation}{section}
\numberwithin{figure}{section}
\theoremstyle{plain}
\newtheorem{thm}{Theorem}[section]
\theoremstyle{plain}
\newtheorem{prop}[thm]{Proposition}
\theoremstyle{remark}
\newtheorem{rem}[thm]{Remark}
\theoremstyle{plain}
\theoremstyle{plain}
\theoremstyle{plain}
\newtheorem{lem}[thm]{Lemma}
\theoremstyle{definition}
\theoremstyle{definition}
\theoremstyle{definition}
\newtheorem{con}{Conjecture}
\newcommand{\real}{\mathbb{R}}
\newcommand{\comp}{\mathbb{C}}
\newcommand{\tr}{\textnormal{Tr}}
\newcommand{\iu}{\textbf{i}}
\newcommand{\un}{\bold{1}}
\newcommand{\B}{B}
\newcommand{\C}{C}
\renewcommand{\P}{B'}
\newcommand{\Q}{C'}
\newcommand{\f}{f}
\newcommand{\F}{F}
\newcommand{\com}{\mathbb{C}}
\begin{document}
\title{Some convexity and monotonicity results of trace functionals}
\author{Haonan Zhang}
\address{Institute of Science and Technology Austria (IST Austria),
	Am Campus 1, 3400 Klosterneuburg, Austria}
\address{Current address:
	Department of Mathematics, University of California, Irvine, CA 92617, USA}
\email{haonanzhangmath@gmail.com}
\maketitle

\begin{abstract}
	In this paper, we prove the convexity of trace functionals
	$$(A,B,C)\mapsto \tr |B^{p}AC^{q}|^{s},$$
	for parameters $(p,q,s)$ that are best possible, where $B$ and $C$ are any $n$-by-$n$ positive definite matrices, and $A$ is any $n$-by-$n$ matrix. We also obtain the monotonicity versions of trace functionals of this type. As applications, we extend some results in \cite{HP12quasi,CFL16some} and resolve a conjecture in \cite{RZ14} in the matrix setting. Other conjectures in \cite{RZ14} will also be discussed. We also show that some related trace functionals are not concave in general. Such concavity results were expected to hold in different problems. 
\end{abstract}

\section{Introduction}
The convexity and monotonicity of trace functionals have been widely studied and admit many applications in mathematical physics and quantum information.  For $n\ge 1$, we use $M_n(\comp)$ to denote the set of all $n$-by-$n$ complex matrices,  $M_n^+(\comp)$ the family of all $n$-by-$n$ positive semi-definite matrices, and $M_n^{++}(\comp)$ the collection of all $n$-by-$n$ positive definite matrices.  $M_n^{\times}(\comp)$ will denote the subset of $M_n(\com)$ consisting of all invertible matrices. We use $\un=\un_n$ to denote the identity matrix in $M_n(\com)$. We use the usual trace $\tr$ over matrix algebras and denote by $\langle A,B\rangle :=\tr [AB^\ast]$ the Hilbert--Schmidt inner product. For any linear map $\phi: M_n(\com)\to M_m(\com)$ we denote by $\phi^\dagger$ its adjoint with respect to $\langle\cdot,\cdot \rangle$. For any matrix $A$ we write $|A|:=(A^\ast A)^{1/2}$. For $p>0$, we write $\|A\|_p:=(\tr|A|^p)^{1/p}$. When $p=\infty$, $\|A\|_{\infty}:=\|A\|$ denotes the operator norm of $A$.

\medskip 

Our first main result is the following triple-convexity theorem:

\begin{thm}\label{thm:joint convexity}
	Fix $n\ge 1$ and $K_1,K_2\in M_n(\com)$. For any $0<p,q\le 1/2$ such that $p+q<1$, and $s\ge 1/(1-p-q)$, the functional 
	\begin{equation*}
	\Psi_{p,q,s}(A,B,C):=\tr |B^{-p}K_1AK_2C^{-q}|^s
	\end{equation*}
	is jointly convex in $(A,B,C)\in M_n(\com)\times M^{++}_n(\com)\times M^{++}_n(\com)$.
\end{thm}

Here the parameters $(p,q,s)\in \real^3$ are best possible; see Remark \ref{rem:optimal} below. This extends some results in \cite{HP12quasi,CFL16some}. Our proof relies on a variational method that is used in \cite{Zhang20}. See Section \ref{sect:convexity}.

\medskip 

Our second main result is a monotonicity form of the above theorem for certain parameters. Recall that a linear map $\phi:M_n(\com)\to M_{m}(\com)$ is {\em Schwarz} if 
\begin{equation*}
\phi(XX^\ast)\le \phi(X)\phi(X)^\ast, \qquad X\in M_n(\com).
\end{equation*}

\begin{thm}\label{thm:conj1UCPTP}
	Fix $m,n\ge 1$. Let $0<p,q\le 1/2$ such that $p+q\le1/2$, and $2\le s\le 1/(p+q)$. Suppose that $B,C\in M^{++}_n(\com)$ and $\phi:M_n(\com)\to M_{m}(\com)$ is any unital map such that $\phi^\dagger$ is Schwarz and $\phi(B),\phi(C)\in M_m^{++}(\com)$. Then for any $A\in M_n(\com)$ we have 
	\begin{equation*}
	\tr |\phi(B)^{-p}\phi(A)\phi(C)^{-q}|^s\le \tr |B^{-p}AC^{-q}|^s.
	\end{equation*}
\end{thm}

On one hand, the trace functionals in Theorem \ref{thm:conj1UCPTP} are similar to those in Theorem \ref{thm:joint convexity}, with the difference that the parameters $(p,q,s)$ in Theorem \ref{thm:conj1UCPTP} are more restrictive. On the other hand, the monotonicity version has many advantages and direct physical applications. It is stronger than the corresponding convexity result, as we shall see below. To prove Theorem \ref{thm:conj1UCPTP}, we use Hadamard three-lines theorem and some recent monotonicity results of Carlen and Müller-Hermes \cite{CMH22schwarz} that extended earlier work of Hiai and Petz \cite{HP12quasi}. Our study of monotonicity version of trace functionals in consideration is motivated by some conjectures of Al-Rashed and Zegarli\'nski \cite{RZ14} in investigating the so-called {\em monotone norms}.  This will be briefly recalled later in the Introduction, and we refer to Section \ref{sect:monotonicity} for the proof of Theorem \ref{thm:conj1UCPTP}.

\medskip

Our third main result is the following non-concavity theorem:

\begin{thm}\label{thm:non-concavity}
	For any $p\neq 0$ and $s>2$, there exist $n\ge 2$ and $K_1,K_2\in M_n(\com)$ such that $\Psi_{p,s}(A)=\tr |K_1 A^p K_2|^{s}$ is not concave in $M^{+}_{n}(\com)$.
\end{thm}

Some interesting applications would follow if the above concavity were true, such as some equality conditions of data processing inequalities for $\alpha-z$ R\'enyi relative entropies \cite{Chehade20} and concavity of some interesting scalar functions that is useful in resolving an open problem on {\em $p$-concavity constant} in \cite{Schechtman95remarks}. However, we show that such concavity can not hold in general. The proof uses again the variational method in \cite{Zhang20} and some non-concavity results of triple-variable trace functionals in \cite{CFL16some}. This will be discussed in Section \ref{sect:non-concavity}.

\medskip

In the remaining part of Introduction, let us recall some conjectures of  Al-Rashed and Zegarli\'nski in \cite{RZ14}, which are one of motivations of this paper. The Hilbert spaces in \cite{RZ14} are infinite-dimensional, but the well-definiteness of some functionals was not well addressed. So we consider matrices here for convenience.  

\smallskip

Recall that a linear map $\phi: M_n(\com)\to M_m(\com)$ is {\em $k$-positive} if $\phi\otimes \text{id}_k:M_n(\com)\otimes M_k(\com)\to M_m(\com)\otimes M_k(\com)$ is positive; $\phi$ is {\em completely positive} if it is $k$-positive for all $k\ge 1$. We say that $\phi$ is {\em trace preserving} in case it preserves the usual trace. Stinespring \cite{Stinespring55} gave a very nice characterization of completely positive maps, which will not be recalled here as it will not be used. Unital $2$-positive maps are Schwarz. See \cite{CMH22schwarz} for more discussions.

\smallskip

 For any $n\ge 1$, $\alpha,\beta\in\real,1\le p< \infty$, and any $P\in M^{++}_n(\com)$, we define
\begin{equation*}
\Lambda_{\alpha,\beta,p}(P,X):=\|P^{\alpha/p}XP^{\beta/p}\|_p^p=\tr|P^{\alpha/p}XP^{\beta/p}|^p,\qquad X\in M_n(\com).
\end{equation*}

 The following conjecture corresponds to \cite[Conjectuer I]{RZ14}. 

\begin{con}\label{conjecture1}
	Suppose $\alpha+\beta=-1$ and $\alpha\in [-1,0]$. Let $2\le p<\infty$. Then for any $m,n\ge 1$, for any $P\in M^{++}_n(\com),X\in M_n(\com)$, and for any unital completely positive trace preserving map $\phi: M_n(\com)\to M_m(\com)$, we have 
	\begin{equation*}
	\Lambda_{\alpha,\beta,p}(\phi(P),\phi(X))\le \Lambda_{\alpha,\beta,p}(P,X).
	\end{equation*} 
\end{con}

They also conjectured \cite[Conjecture II]{RZ14} that the above holds for $p\in[1,2)$:

\begin{con}\label{conjecture2}
	Suppose $\alpha+\beta=-1$ and $\alpha\in [-1,0]$. Let $p\in [1,2)$. Then for any $m,n\ge 1$, for any $P\in M^{++}_n(\com),X\in M_n(\com)$, and for any unital completely positive trace preserving map $\phi: M_n(\com)\to M_m(\com)$, we have  
	\begin{equation*}
	\Lambda_{\alpha,\beta,p}(\phi(P),\phi(X))\le \Lambda_{\alpha,\beta,p}(P,X).
	\end{equation*} 
\end{con}

In \cite{RZ14} the maps $\phi$ are completely positive trace preserving, and the unitality condition was accidentally missed. To see that it can not be true for general completely positive trace preserving maps, consider the linear map ({\em partial trace}) $\phi:M_{2n}(\com)\to M_n(\com)$ given by 
\begin{equation*}
\phi 
\begin{pmatrix*}
A &\ast\\
\ast & B
\end{pmatrix*}
:=A+B.
\end{equation*}
Then it is completely positive trace preserving and for
\begin{equation*}
X'=
\begin{pmatrix*}
X&0\\
0&X
\end{pmatrix*}\qquad\text{and}\qquad
P'=
\begin{pmatrix*}
P&0\\
0&P
\end{pmatrix*},
\end{equation*}
we have $\phi(X')=2X$ and $\phi(P')=2P$. Then
\begin{equation*}
\Lambda_{\alpha,\beta,p}(\phi(P'),\phi(X'))=2^{(\alpha+\beta+p)}\Lambda_{\alpha,\beta,p}(P,X).
\end{equation*} 
Since $\alpha+\beta=-1$, the monotonicity 
\begin{equation*}
\Lambda_{\alpha,\beta,p}(\phi(P),\phi(X))\le \Lambda_{\alpha,\beta,p}(P,X),
\end{equation*}
holds only if $2^{\alpha+\beta+p}=2^{p-1}\le 1$. This is impossible either for $p\ge 2$ or $p\in (1,2)$, so both Conjecture I and II in \cite{RZ14} fail (this does not exclude the case $p=1$ in Conjecture II, but we will explain later that even the weaker Conjecture \ref{conjecture2} is false). 

One may also see this by looking at the homogeneity order of $\phi$, which is $\alpha+\beta+p=p-1$, so Conjecture I and II in \cite{RZ14} cannot be true unless $p=1$. Therefore, we consider Conjectures \ref{conjecture1} and \ref{conjecture2} in the sequel.

\medskip

In \cite[Theorem 2.1]{RZ14}, Al-Rashed and Zegarli\'nski proved that Conjecture I in \cite{RZ14} holds for all $p=2^n,n\ge 1$ using a clever induction argument. In their proof the key inequality \cite[(M.1)]{RZ14} 
\begin{equation*}
\phi(P)^{-\alpha}\le \phi(P^\alpha)^{-1},~~\alpha\in (0,1)
\end{equation*}
is not true for general completely positive trace preserving maps (for example the above partial trace $\phi$); but it holds when $\phi$ is furthermore unital. In fact, \cite[(M.1)]{RZ14}  follows from operator monotone decreasing of $x\mapsto x^{-1}$ and 
\begin{equation}\label{ineq:wrong Jensen}
\phi(P^\alpha)\le \phi(P)^{\alpha},~~\alpha\in (0,1).
\end{equation}
that is an operator Jensen type inequality valid for unital positive maps \cite[Theorem 2.1]{Choi74} (see also  \cite[Proposition 2.7.1]{Bhatia09pdm}. So Al-Rashed and Zegarli\'nski  actually confirmed Conjecture \ref{conjecture1} for $p=2^n,n\ge 1$.

As a corollary of Theorem \ref{thm:conj1UCPTP}, we confirm Conjecture \ref{conjecture1} in the affirmative:

\begin{prop}
	Conjecture \ref{conjecture1} holds for more general unital maps whose dual are Schwarz.
\end{prop}

\begin{proof}
	In fact, take $B=C$ in Theorem \ref{thm:conj1UCPTP} and take $(p,q,s)$ in Theorem \ref{thm:conj1UCPTP} to be $(-\alpha/p,-\beta/p,p)$ in Conjecture \ref{conjecture1}.
\end{proof}

Now let us explain why Conjecture \ref{conjecture2} can not hold. This is because the monotonicity implies the joint convexity of $\Lambda_{\alpha,\beta,p}$, which is not true even for scalars. A direct computation shows that the function $\Lambda_{\alpha,\beta,p}(y,x)=x^{p}y^{-1}$ is jointly convex over $(0,\infty)\times (0,\infty)$ only if $p(p-2)\ge 0$. This is impossible for $p\in [1,2)$. So Conjecture \ref{conjecture2} fails.

To see that the monotonicity implies the joint convexity, consider 
\begin{equation}\label{eq:phi}
\phi\begin{pmatrix*}
A & C\\
D& B
\end{pmatrix*}
:=\frac{1}{2}\begin{pmatrix*}
A+B & C+D\\
C+D& A+B
\end{pmatrix*}
%=\frac{1}{2}\begin{pmatrix*}
%A & C\\
%D& B
%\end{pmatrix*}
%+\frac{1}{2}\begin{pmatrix*}
%0 &1 \\
%1& 0
%\end{pmatrix*}
%\begin{pmatrix*}
%A & C\\
%D& B
%\end{pmatrix*}\begin{pmatrix*}
%1& 0
%\end{pmatrix*}
\end{equation}
that is unital completely positive trace preserving over $M_{2n}(\com)$. For any $X_j,P_j\in M_n^{++}(\com),j=1,2$, put 
\begin{equation*}
X:=\begin{pmatrix*}
X_1 & 0\\
0& X_2
\end{pmatrix*}\qquad\text{and}\qquad
P:=\begin{pmatrix*}
P_1 & 0\\
0& P_2
\end{pmatrix*}.
\end{equation*}
Suppose that Conjecture \ref{conjecture2} holds. Then we have the mid-point convexity by applying it to the above $\phi$:
\begin{equation*}
\Lambda_{\alpha,\beta,p}\left(\frac{P_1+P_2}{2},\frac{X_1+X_2}{2}\right)\le \frac{1}{2}\left[\Lambda_{\alpha,\beta,p}\left(P_1,X_1\right)+\Lambda_{\alpha,\beta,p}\left(P_2,X_2\right)\right].
\end{equation*}
By continuity, we obtain the usual joint convexity of $\Lambda_{\alpha,\beta,p}$.

\medskip

We will also discuss another conjecture in \cite{RZ14}. To any $P\in M^{++}_n(\com)$ that has unit trace, we associate a (noncommutative Luxemburg) norm $\|\cdot\|_P$. Then \cite[Conjecture IV]{RZ14} concerns the monotonicity of
\begin{equation*}
\langle A,B\rangle_{D,P}:=\left.\dfrac{d}{ds}\dfrac{d}{dt}\right|_{s=t=0}\|D+sA+tB\|_P^2
\end{equation*}
under completely positive trace preserving maps, where the norm is chosen to be $\|X\|_P:=\Lambda_{\alpha,\beta,p}(P,X)^{1/p}$. Let us consider here only a special case $P=\un$: 

\begin{con}\label{conjecture4}
	Fix $m,n\ge 1$. For $1<p<\infty$, the quadratic form defined by 
	\begin{equation}\label{eq:quadradic form-old}
	\langle A,B\rangle_{D}:=\left.\dfrac{d}{ds}\dfrac{d}{dt}\right|_{s=t=0}\|D+sA+tB\|_p^2
	\end{equation}
	is monotone: for all $A,B,D\in M_n(\com)$ and for all completely positive trace preserving maps $\phi: M_n(\com)\to M_m(\com)$,
	\begin{equation*}
	\langle \phi(A),\phi(A)\rangle_{\phi(D)}\le \langle A,A\rangle_{D}.
	\end{equation*}
\end{con}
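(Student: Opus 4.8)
The plan is to recognize that Conjecture \ref{conjecture4} is the $P=\un$ special case of a monotonicity statement for the quadratic form arising as the second mixed derivative of the squared Schatten $p$-norm. The first step is to compute the quadratic form $\langle A,B\rangle_D$ explicitly. Writing $\|X\|_p^2=(\tr|X|^p)^{2/p}$ and expanding $\|D+sA+tB\|_p^2$ to second order in $(s,t)$, I would separate the contribution of the $(2/p)$-power (a scalar chain rule term) from the contribution of the trace functional $\tr|X|^p$ itself. The key analytic input is the second-order perturbation expansion of $X\mapsto\tr|X|^p$ along two directions, which produces an integral representation involving divided differences of the map $x\mapsto x^{p/2}$ (or $x^p$) applied to $|D|$; the mixed derivative $\partial_s\partial_t$ picks out the symmetric bilinear part.

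Once the quadratic form is in hand, the strategy is to reduce the monotonicity under CPTP maps to a convexity statement, exactly in the spirit of the reduction already used in the Introduction to pass between convexity and monotonicity. Concretely, I would embed the problem on $\bh$ into $B(\h\otimes\comp^2)$ and use the canonical UCPTP averaging map together with Stinespring-type dilation to express any CPTP map as a composition of an isometric embedding, a unitary conjugation, and a partial trace/compression. The heart of the matter is then to show that the quadratic form $\langle A,A\rangle_D$, as a function of the pair $(D,A)$, behaves monotonically under these building blocks. For compressions and conditional expectations this typically amounts to a joint convexity property of the underlying trace functional $\tr|X|^p$ in its relevant variables, which for $1<p<\infty$ is the content of the convexity of the Schatten norm.

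The cleanest route I expect is to identify the quadratic form with a Hessian that can be written as $\langle A,A\rangle_D=c_p\,\tr\big(\Phi_D(A)\,A^\ast\big)$ for a suitable positive superoperator $\Phi_D$ built from $|D|$, and then to invoke the monotonicity of the associated operator-convex machinery. Here I would try to connect $\langle\,\cdot\,,\cdot\,\rangle_D$ to a derivative of the relative-entropy-type functionals, so that the data processing inequality for those functionals transfers directly to the monotonicity of the quadratic form. This is where the results on joint convexity and monotonicity developed earlier in the paper (in particular Theorem \ref{thm:conj1UCPTP} and the convexity Theorem \ref{thm:joint convexity}) should feed in: the quadratic form is a second derivative of a functional whose monotonicity under UCPTP maps has already been established.

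The main obstacle, I expect, is twofold. First, the conjecture is stated for \emph{all} CPTP maps, not merely unital ones, and the Introduction has already exhibited that the unital hypothesis is essential for the related functionals $\Lambda_{\alpha,\beta,p}$; so one must check carefully whether the normalization built into the squared-norm Hessian (through the chain-rule factor $2/p$ and the homogeneity of $\|\cdot\|_p$) genuinely rescues the non-unital case, or whether Conjecture \ref{conjecture4} must likewise be corrected to UCPTP maps. Second, even granting unitality, the sign analysis of the Hessian for $1<p<2$ versus $p>2$ is delicate: the scalar computation $x^p y^{-1}$ already showed that convexity can fail precisely in the range $p(p-2)<0$, so I anticipate that the quadratic form is monotone for $p\ge 2$ but that the genuine difficulty — and possibly a counterexample — lives in the regime $1<p<2$. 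I would therefore first settle the sign of the quadratic form on scalars and on $2\times2$ diagonal blocks to decide whether the statement is true as posed before attempting the full operator argument.
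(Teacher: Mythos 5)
There is a genuine gap, and it is fatal to the whole plan: the statement you are setting out to prove is false, and the paper's Section~\ref{sect:conjecture4} \emph{refutes} Conjecture~\ref{conjecture4} rather than proving it. Your closing hedge was the right instinct, but you both stopped short of executing it and misjudged where the failure lives: it is not confined to $1<p<2$, and the case $p\ge 2$ does not survive either. Two concrete obstructions, each within reach of your own proposed sanity checks, settle this. First, for non-unital CPTP maps the doubling construction already used in the Introduction kills every $p>1$: take $D'=\diag(D,D)$, $A'=\diag(A,A)$ on $\h\otimes\comp^2$ and $\phi=\tr_2$; since $\|X\|_p^2$ is jointly $2$-homogeneous, $\langle \phi(A'),\phi(A')\rangle_{\phi(D')}=\langle 2A,2A\rangle_{2D}=4\langle A,A\rangle_D$, while $\|D'+sA'+tA'\|_p^2=2^{2/p}\|D+sA+tA\|_p^2$ gives $\langle A',A'\rangle_{D'}=2^{2/p}\langle A,A\rangle_D$, and $4>2^{2/p}$ for all $p>1$. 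So the chain-rule factor and homogeneity do \emph{not} rescue the non-unital case, for any $p$, including $p\ge2$.

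Second — and this is the paper's actual argument — even after normalizing $\tr D=1$, the conjecture fails because of the Morozova--Chentsov--Petz classification, not because of a delicate Hessian sign analysis. If the form $K'_D(A,B):=\langle A^\ast,B\rangle_D$ were monotone under all CPTP maps, it would be a monotone metric (the remaining axioms are checked via \cite[Theorem 4.1]{RZ14}), hence by \cite[Theorem 5]{Petz96metric} of the form $K^f_D(A,B)=\langle A,(f(L_DR_D^{-1})R_D)^{-1}B\rangle$ for some operator monotone $f$. But testing $A=B=D$ with $\tr D=1$ gives $K'_D(D,D)=2\|D\|_p^2$, which varies with $D$ whenever $p>1$, whereas any Petz metric evaluated at $(D,D)$ is a constant depending only on $f$; and at $p=1$ the commutative restriction $K'_D(A,B)=2\,\tr A\,\tr B$ is not of the commutative monotone-metric form $f(1)\tr(A^\ast D^{-1}B)$. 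So no $1<p<\infty$ works, and your proposed machinery (Stinespring decomposition, reduction to joint convexity, feeding in Theorems~\ref{thm:joint convexity} and~\ref{thm:conj1UCPTP}) has no true statement to converge to. The correct repair is the one the paper actually proves: replace the square in \eqref{eq:quadradic form-old} by the $p$-th power and restrict to UCPTP maps, where monotonicity does hold under an operator log-convexity hypothesis on $f''$ (Theorem~\ref{thm:double operator}, following \cite{Li20sobolev}).
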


By the work of Morozova--Chentsov \cite{MC89russian}, Petz \cite{Petz96metric}, and \cite{Kumagai11extented} on the (extended) monotone metrics, Conjecture \ref{conjecture4} can not hold. However, a similar problem has a positive answer: if we replace the power $2$ in \eqref{eq:quadradic form-old} with $p$, then we have the monotonicity under unital maps whose dual are Schwarz for $1<p<2$. Further discussions of monotonicity for more general trace functionals suggest connections to operator log-convexity studied by Ando and Hiai \cite{AH11log}. See Section \ref{sect:otherconjectures} for details.

\subsection*{Acknowledgements} I am grateful to Boguslaw Zegarli\'nski for asking me the questions in \cite{RZ14} and for helpful communication. I also want to thank Paata Ivanisvili for drawing \cite{Schechtman95remarks} to my attention and for useful correspondence.  Many thanks to the anonymous referee for the valuable comments and for pointing out some errors in an earlier version of the paper. This work is partially supported by the European Union's Horizon 2020 research and innovation programme under the Marie Sk\l odowska-Curie grant agreement No. 754411 and the Lise Meitner fellowship, Austrian Science Fund (FWF) M3337.

\section{Convexity results}
\label{sect:convexity}
%As we have seen from the proof of \ref{thm:conj2UCPTP}, the monotonicity under UCPTP maps implies the joint convexity. But 
%Although the convexity is weaker than the monotonicity that will be proved in Section \ref{sect:monotonicity}, we still prefer to treat the convexity first using a more direct approach. 

%We consider matrices here, i.e. $\dim H<\infty$. Moreover, we consider only invertible matrices. This will not cause any problems by a standard approximation argument, though the set of invertible matrices is not convex. %For example, suppose that we aim to study the convexity of some continuous function $f:\bh\to \real$: $f(\lambda A+\mu B)\le \lambda f(A)+\mu f(B)$, for any $\lambda,\mu\in(0,1)$ such that $\lambda+\mu=1$, and any $A,B\in\bh$. Then it suffices to consider the convexity of $f$ over non-convex set $M^{\times}_n(\com)$: $f(\lambda A+\mu B)\le \lambda f(A)+\mu f(B)$, for any $\lambda,\mu\in(0,1)$ such that $\lambda+\mu=1$, and any $A,B\in M^{\times}_n(\com)$ such that $\lambda A+\mu B\in M^{\times}_n(\com)$. In fact, any $X\in\bh$ can be approximated by $X_\epsilon:=X+\epsilon\un$ that lies in $M^{\times}_n(\com)$ for $\epsilon>0$ small enough. So for $\epsilon>0$ small enough, we have $A_\epsilon,B_\epsilon,(\lambda A+\mu B)_\epsilon=\lambda A_\epsilon+\mu B_\epsilon\in M^{\times}_n(\com)$ and thus $f(\lambda A_\epsilon+\mu B_\epsilon)\le \lambda f(A_\epsilon)+\mu f(B_\epsilon)$. Letting $\epsilon\to0^+$, we obtain the convexity of $f$ over $\bh$. 

In this section we prove Theorem \ref{thm:joint convexity}, for which we need the following lemma that was essentially obtained in \cite[Remark 3.8]{Zhang20}. We recall the proof here for reader's convenience.
\begin{lem}\label{lem:variation}
	Suppose that $r_j, 0\le j\le 3$ are positive numbers such that $1/r_0=\sum_{j=1}^{3}1/r_j$. Then for any $n\ge 1$ and any $A,B,C\in M^{\times}_n(\com)$, we have 
	
	\begin{equation}\label{eq:variational-min}
	\tr|BAC|^{r_0}=\min_{X,Y\in M^{\times}_n(\com)} \left\{\frac{r_0}{r_1}\tr|BX|^{r_1}+\frac{r_0}{r_2}\tr|X^{-1}AY^{-1}|^{r_2}+\frac{r_0}{r_3}\tr|YC|^{r_3}\right\},
	\end{equation}
	and 
	\begin{equation}\label{eq:variational-max}
	\tr|B^{-1}AC^{-1}|^{r_2}=\max_{X,Y\in M^{\times}_n(\com)} \left\{\frac{r_2}{r_0}\tr|XAY|^{r_0}-\frac{r_2}{r_1}\tr|XB|^{r_1}-\frac{r_2}{r_3}\tr|CY|^{r_3}\right\}.
	\end{equation}
\end{lem}

\begin{proof}
	We only prove \eqref{eq:variational-min} here as the proof of \eqref{eq:variational-max} is essentially the same.  For any $A,B,C,X,Y\in M^{\times}_n(\com)$ we have 
	\begin{align*}
	\tr|BAC|^{r_0}
	=\|BAC\|_{r_0}^{r_0}
	\le &\|BX\|^{r_0}_{r_1}\|X^{-1}AY^{-1}\|^{r_0}_{r_2}\|YC\|^{r_0}_{r_3}\\
	\le &\frac{r_0}{r_1}\tr|BX|^{r_1}+\frac{r_0}{r_2}\tr|X^{-1}AY^{-1}|^{r_2}+\frac{r_0}{r_3}\tr|YC|^{r_3},
	\end{align*}
	where we have used H\"older's inequality for $\|\cdot\|_p$ (see \cite[Exercise IV.2.7]{Bhatia97matrixanalysis}), and Young's inequality for scalars. 
	Set $W:=BAC$, and let $W=U|W|$ be the polar decomposition. If we choose
	$$X=AC|W|^{(r_0-r_1)/r_1} \qquad \text{and} \qquad Y=|W|^{(r_0-r_3)/r_3}U^\ast BA,$$ 
	then it is easy to check that 
	\begin{equation*}
	BX=BAC|W|^{(r_0-r_1)/r_1}=U|W|^{r_0/r_1},
	\end{equation*}
	\begin{equation*}
	YC=|W|^{(r_0-r_3)/r_3}U^\ast BAC=|W|^{r_0/r_3},
	\end{equation*}
	and 
	\begin{equation*}
	X^{-1}AY^{-1}=|W|^{(r_1-r_0)/r_1}C^{-1}A^{-1}B^{-1}U|W|^{(r_3-r_0)/r_3}
	=|W|^{r_0/r_2}.
	\end{equation*}
	For such $(X,Y)$, the functional in the bracket of right side of \eqref{eq:variational-min} is equal to
	\begin{equation*}
	\frac{r_0}{r_1}\tr|W|^{r_0}+\frac{r_0}{r_2}\tr|W|^{r_0}+\frac{r_0}{r_3}\tr|W|^{r_0}
	=\tr|W|^{r_0}.
	\end{equation*}
	So \eqref{eq:variational-min} is proved.
\end{proof}

\begin{proof}[Proof of Theorem \ref{thm:joint convexity}]
	By approximation, we may assume that $K_1,K_2$ are invertible. The proof is based on the following variational formula:
	\begin{equation}
	\begin{aligned}\label{eq:variation}
	&\tr |B^{-p}K_1AK_2C^{-q}|^s\\
	=&\max_{X,Y\in M^{\times}_n(\com)}\left\{\frac{s}{r}\tr|XK_1AK_2Y|^r-sp\tr|XB^{p}|^{1/p}-sq\tr|C^{q}Y|^{1/q}\right\},
	\end{aligned}
	\end{equation}
	where $r^{-1}=p+q+s^{-1}$. Let us first finish the proof given \eqref{eq:variation}. It is easy to see that the joint convexity of $\Psi_{p,q,s}$ will follow from the convexity of 
	$$\Psi_1(A):=\tr|XK_1AK_2Y|^r,\qquad A\in M^{\times}_n(\com),$$
	for any $K_1,K_2,X,Y\in M^{\times}_n(\com)$, and the concavity of 
	$$\Psi_2(B):=\tr|XB^{p}|^{1/p},\qquad B\in M^{+}_{n}(\com),$$
	and 
	$$\Psi_3(C):=\tr|C^{q}Y|^{1/q},\qquad C\in M^{+}_{n}(\com),$$
	for any $X,Y\in M^{\times}_n(\com)$. Indeed, this is a consequence of \eqref{eq:variation} and the fact that convexity is stable under taking maximum (see \cite[Lemma 3.2]{Zhang20} for a proof). 
	
	By assumption, $r=(p+q+s^{-1})^{-1}\ge 1$. So $\|XK_1\cdot K_2Y\|_r$ is convex by Minkowski inequality. Thus $\Psi_1$, as the composition of scalar convex function $x^r,r\ge 1$ and $\|XK_1\cdot K_2Y\|_r$, is also convex. 
	Since $0<p,q\le 1/2$, the functionals
	\begin{equation*}
	\Psi_2(B)=\tr(XB^{2p}X^\ast)^{1/(2p)}\qquad \text{and} \qquad \Psi_3(C):=\tr(Y^*C^{2q}Y)^{1/(2q)}
	\end{equation*}
	are both concave for any $X,Y\in M^{\times}_n(\com)$. This is due to a result of Epstein \cite{Epstein73}. See \cite{CFL18survey,Zhang20} for further discussions. 
	
	Now it remains to prove the variational formula \eqref{eq:variation}. This is a direct consequence of Lemma \ref{lem:variation}. In fact, taking in \eqref{eq:variational-max}
	$$(r_0,r_1,r_2,r_3)=(r,1/p,s,1/q)$$
	and
	$$(A,B,C)=(K_1\tilde{A}K_2,\tilde{B}^{p},\tilde{C}^{q}),$$
	we get
	\begin{align*}
	&\tr|\tilde{B}^{-p}K_1\tilde{A}K_2 \tilde{C}^{-q}|^{s}\\
	=&\max_{X,Y\in M^{\times}_n(\com)} \left\{\frac{s}{r}\tr|XK_1\tilde{A}K_2Y|^{r}-sp\tr|X\tilde{B}^{p}|^{1/p}-sq\tr|\tilde{C}^{q}Y|^{1/q}\right\},
	\end{align*}
	which is exactly \eqref{eq:variation}.
\end{proof}

\begin{rem}
	The joint convexity/concavity of two-variable trace functionals 
	$$M^{++}_n(\com)\times M^{++}_n(\com)\ni (B,C)\mapsto\Psi_{p,q,s}(\un,B,C)$$ 
	have been understood very well \cite{Zhang20}. For the triple-variable case, the joint convexity of 
	$$M_n(\com)\times M^{++}_n(\com)\times M^{++}_n(\com)\ni(A,B,C)\mapsto \tr[A^\ast B^pA C^r]$$
	was known when $p,r<0$ and $-1\le p+r<0$. See \cite[Example 6]{HP12quasi} or \cite[Corollary 3.3]{CFL16some}. This corresponds to Theorem \ref{thm:joint convexity} with $s=2$ and $K_1=K_2=\un$. 
\end{rem}
\begin{rem}\label{rem:optimal}
	The range for $(p,q,s)$ in Theorem \ref{thm:joint convexity} is optimal in the following sense: if for non-zero (otherwise it is already known \cite{Zhang20}) $\alpha,\beta$ and $\gamma$, the functional
	\begin{equation*}
	f(A,B,C)=\tr|B^{\alpha}AC^{\beta}|^\gamma,\qquad A,B,C \in M_n^{++}(\com),
	\end{equation*}
	is jointly convex for any $n\ge 1$, then necessarily
	$$-\frac{1}{2}\le \alpha,\beta< 0, \alpha+\beta>-1 \text{  and  } \gamma\ge\frac{1}{1+\alpha+\beta}.$$
	In fact, it is known \cite[Theorem 1.1, Proposition 2.3]{Zhang20} that for $p,q\in \real\setminus\{0\}, s>0$, the functional 
	$$(A,B)\mapsto \tr(B^{q/2}A^p B^{q/2})^s,\qquad A,B \in M_n^{++}(\com),$$
	is jointly convex for any $n\ge 1$ if and only if either 
	$$-1\le p,q<0, \qquad s>0$$
	or 
	$$-1\le \min\{p,q\}<0, \qquad 1\le \max\{p,q\}\le 2, \qquad p+q>0, \qquad s\ge 1/(p+q).$$
	Using this, we deduce that
	\begin{enumerate}
		\item $\gamma>0$ and $-1/2\le \alpha <0$ by joint convexity of $ f(A,B,\un)=\tr(B^{\alpha}A^2B^{\alpha})^{\gamma/2}$;
		\item $-1/2\le \beta<0$ by joint convexity of $(A,C)\mapsto f(A,\un,C)=\tr(C^{\beta}A^2C^{\beta})^{\gamma/2}$;
		\item $\alpha+\beta>-1$ and $\gamma\ge 1/(1+\alpha+\beta)$ by joint convexity of $f(A,A,C)=\tr(C^{\beta}A^{2+2\alpha}C^{\beta})^{\gamma/2}$.
	\end{enumerate}
\end{rem}

\section{Monotonicity under unital maps whose dual are Schwarz}
\label{sect:monotonicity}
In this section we prove the monotonicity of a family of triple-variable trace functionals under unital maps whose dual are Schwarz. For this we need a few lemmas. Let $S$ be the open strip $\{z\in\comp:0< \Re z< 1\}$ and $\bar{S}$ be its closure. 

\begin{lem}\label{lem:three line}
	Fix $n\ge 1$ and $B,C\in M^{+}_n(\com)$. Let $F:\bar{S}\to M_n(\com)$ be a bounded function that is analytic on $S$ and continuous on $\bar{S}$. Suppose that $1\le p_0<p_1\le \infty$ and
	\begin{equation*}
	M_k:=\sup_{t\in\real}\|\B^{1/p_k}F(k+\iu t)\C^{1/p_k}\|_{p_k}<\infty, \qquad k=0,1.
	\end{equation*}
	Then for any $\theta\in (0,1)$ and $1/p_{\theta}:=(1-\theta)/p_0+\theta/p_1$, we have
	\begin{equation*}
	\|\B^{1/p_{\theta}}F(\theta)\C^{1/p_{\theta}}\|_{p_{\theta}}\le M_0^{1-\theta}M_1^{\theta}.
	\end{equation*}
	
\end{lem}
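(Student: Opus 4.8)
The plan is to prove a weighted, operator-valued Hadamard three-lines theorem by reducing it to the scalar one through a duality argument. Write $1/p(z):=(1-z)/p_0+z/p_1$, which is affine (hence analytic) on $\bar S$, equals $1/p_k$ at $z=k$ and $1/p_\theta$ at $z=\theta$; since $\B,\C$ are positive, the operators $\B^{1/p(z)}$ and $\C^{1/p(z)}$ are well defined and depend analytically on $z$, the real part of the exponent staying between $1/p_1$ and $1/p_0$. Set $Y:=\B^{1/p_\theta}F(\theta)\C^{1/p_\theta}$ and let $Y=U|Y|$ be its polar decomposition; the hypothesis $\B,\C\in\S_{1/2}$ together with H\"older's inequality guarantees $Y\in\S_{p_\theta}$, so $\|Y\|_{p_\theta}<\infty$ and the powers of $|Y|$ introduced below become trace-class after pairing.

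Next I introduce an \emph{analytic dual density}. Let $p_k'$ be the conjugate exponent of $p_k$ and put $\gamma(z):=(1-z)\,p_\theta/p_0'+z\,p_\theta/p_1'$, so that $\gamma(k)\,p_k'=p_\theta$ for $k=0,1$, and since $(1-\theta)/p_0'+\theta/p_1'=1/p_\theta'$ one gets $\gamma(\theta)=p_\theta/p_\theta'=p_\theta-1$. Consider then the scalar function
\begin{equation*}
G(z):=\|Y\|_{p_\theta}^{-(p_\theta-1)}\,\tr\!\left(\B^{1/p(z)}F(z)\C^{1/p(z)}\,|Y|^{\gamma(z)}U^{\ast}\right),\qquad z\in\bar S.
\end{equation*}
Because $\Re\gamma(z)\ge 0$ on $\bar S$, the operator $|Y|^{\gamma(z)}$ is bounded and analytic in $z$; combined with the analyticity of the remaining factors and the $\S_{1/2}$-hypothesis (which makes the trace converge and, via H\"older, provides a uniform bound), $G$ is analytic in $S$ and bounded continuous on $\bar S$. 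At $z=\theta$ the factors collapse: $\B^{1/p_\theta}F(\theta)\C^{1/p_\theta}=Y=U|Y|$ and $\gamma(\theta)=p_\theta-1$ yield $G(\theta)=\|Y\|_{p_\theta}^{-(p_\theta-1)}\tr|Y|^{p_\theta}=\|Y\|_{p_\theta}$.

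The two boundary estimates come from H\"older's inequality for Schatten norms. On the line $\Re z=k$ the exponent $1/p(k+\iu t)$ has real part $1/p_k$ and $\gamma(k+\iu t)$ has real part $\gamma(k)$, so $\|\B^{1/p(k+\iu t)}F(k+\iu t)\C^{1/p(k+\iu t)}\|_{p_k}\le M_k$ while $\||Y|^{\gamma(k+\iu t)}U^{\ast}\|_{p_k'}\le(\tr|Y|^{\gamma(k)p_k'})^{1/p_k'}=\|Y\|_{p_\theta}^{\,p_\theta/p_k'}$; hence $|G(k+\iu t)|\le M_k\,\|Y\|_{p_\theta}^{\,p_\theta/p_k'-(p_\theta-1)}$. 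Applying the scalar three-lines theorem to $G$ and raising these bounds to the weights $1-\theta$ and $\theta$, the residual powers of $\|Y\|_{p_\theta}$ combine with total exponent $p_\theta\big((1-\theta)/p_0'+\theta/p_1'\big)-(p_\theta-1)=p_\theta/p_\theta'-(p_\theta-1)=0$, so they cancel and one is left with $\|Y\|_{p_\theta}=G(\theta)\le M_0^{1-\theta}M_1^{\theta}$, which is the claim.

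The main obstacle I expect is not the algebra of exponents (which is forced, and self-consistent precisely because $1/p(z)$ and $\gamma(z)$ are affine with the correct endpoint values) but the complex-analytic bookkeeping in infinite dimensions: one must verify that $z\mapsto G(z)$ really is analytic on $S$ and, crucially, bounded and continuous up to $\bar S$, so that the scalar three-lines theorem applies. This is exactly where $\B,\C\in\S_{1/2}$ and the bounded-continuity of $F$ enter---the task is to produce a single $z$-independent finite bound for $\|\B^{1/p(z)}F(z)\C^{1/p(z)}|Y|^{\gamma(z)}\|_{1}$ on the strip by a careful H\"older splitting, and to justify differentiating under the trace (for instance by dominated convergence applied to the difference quotients). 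The endpoint $p_1=\infty$ should be handled by the usual conventions $1/p_1=0$ and $p_1'=1$, with $\B^{0}$ and $\C^{0}$ read as the relevant support projections.
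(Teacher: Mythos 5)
Your proposal is correct and is essentially the paper's own argument: the paper likewise reduces to the scalar three-lines theorem by pairing $\B^{1/p_z}F(z)\C^{1/p_z}$ against an analytic family of dual elements with exponents governed by $1/p_z'$ and bounding the boundary lines via H\"older (the purely imaginary powers of $\B$, $\C$ and $|Y|$ commuting past the real powers and acting as contractions); the only difference is that you pair once with the norming functional $U|Y|^{\gamma(z)}$ built from $Y=\B^{1/p_\theta}F(\theta)\C^{1/p_\theta}$ itself, whereas the paper runs the same construction $X(z)=U|X|^{p_\theta'/p_z'}$ for every $X$ in the unit ball of $\S_{p_\theta'}$ and takes the supremum, and indeed your family is exactly the paper's specialized to the extremal $X$ since $(p_\theta-1)p_\theta'/p_z'=p_\theta/p_z'=\gamma(z)$. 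The only caveats are trivial: treat $Y=0$ separately (your prefactor $\|Y\|_{p_\theta}^{-(p_\theta-1)}$ is then undefined) and note $p_0<p_1$ forces $p_\theta>1$ so $p_\theta'<\infty$; the analytic/continuity bookkeeping you flag is left at the same level of detail in the paper.
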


\begin{proof}
	For any $1\le p\le \infty$, we denote by $p'$ the conjugate index of $p$, i.e. $1/p+1/p'=1$. Then we have by duality
	\begin{equation}\label{eq:duality}
	\|\B^{1/p_{\theta}}F(\theta)\C^{1/p_{\theta}}\|_{p_{\theta}}
	=\sup\left\{|\langle \B^{1/p_{\theta}}F(\theta)\C^{1/p_{\theta}}, X^\ast\rangle|:\|X\|_{p'_{\theta}}\le 1 \right\}.
	\end{equation}
	Now take any $X$ such that $\|X\|_{p'_{\theta}}\le 1$. Let $X=U|X|$ be the polar decomposition with $U$ being unitary. Put $X(z):= U|X|^{p'_{\theta}/p'_z}$, where
	\begin{equation}\label{eq:p_z}
	\frac{1}{p_z}:=\frac{1-z}{p_0}+\frac{z}{p_1}, \qquad z\in\comp,
	\end{equation}
	or equivalently
	\begin{equation*}
	\frac{1}{p'_z}=\frac{1-z}{p'_0}+\frac{z}{p'_1}, \qquad z\in\comp.
	\end{equation*}
	Then by definition, we have $X(\theta)=U|X|=X$. For any $x\in [0,1]$ and $t\in\real$, we may write $X(x+\iu t)=U_t|X|^{p'_\theta/p'_x}$ for some unitary $U_t$, and thus
	\begin{equation}\label{ineq:p-norm bound of X}
	\|X(x+\iu t)\|_{p'_x}=\|X\|_{p'_{\theta}}^{p'_{\theta}/p'_x}\le 1.
	\end{equation}
	Consider the function 
	\begin{equation*}
	G(z):=\langle \B^{1/p_z}F(z)\C^{1/p_z},X(z)^\ast\rangle,
	\end{equation*}
	which is analytic on $S$ and continuous on $\bar S$. Clearly, $G(\theta)=\langle \B^{1/p_{\theta}}F(\theta)\C^{1/p_{\theta}}, X^\ast\rangle$. By H\"older's inequality and \eqref{ineq:p-norm bound of X}, we have for any $x\in [0,1]$ and $t\in\real$ that
	\begin{align*}
	|G(x+\iu t)|
	\le &\| V_t\B^{1/p_x}F(x+\iu t)\C^{1/p_{x}}W_t\|_{p_x} \|X(x+\iu t)\|_{p'_x}\\
	\le &\| \B^{1/p_x}\|_{2p_x}\|F(x+\iu t)\|_\infty\|\C^{1/p_{x}}\|_{2p_x}\\
	=&\| \B\|_{2}^{1/p_x}\|F(x+\iu t)\|_\infty\|\C\|_{2}^{1/p_x},
	\end{align*}
	where $V_t$ and $W_t$ are two unitaries. By assumption, this is bounded from above uniformly in $z=x+\iu t\in\bar{S}$. So $G$ is bounded in $\bar{S}$. A similar argument gives
	\begin{align*}
	|G(k+\iu t)|
	\le \|\B^{1/p_k}F(k+\iu t)\C^{1/p_{k}}\|_{p_k} \|X(k+\iu t)\|_{p'_k}
	\le  M_k
	\end{align*}
	for $k=0,1$ and $t\in\real$.	Applying Hadamard three-lines theorem to $G$, we get 
	\begin{equation*}
	|G(\theta)|=|\langle \B^{1/p_{\theta}}F(\theta)\C^{1/p_{\theta}},X^\ast\rangle|\le M_0^{1-\theta}M_1^{\theta}.
	\end{equation*}
	Since this holds for any $X$ such that $\|X\|_{p'_\theta}\le 1$, we conclude the proof by \eqref{eq:duality}.
\end{proof}

\begin{lem}\label{lem:Riesz-Thorin}
	Fix $m,n\ge 1$. Let $\B,\C\in M^{++}_n(\com)$ and $\P,\Q\in M^{++}_m(\com)$. Suppose that $1\le p_0\le p_1\le \infty$ and $1\le q_0\le q_1\le \infty$. For $\theta\in [0,1]$, put
	\begin{equation*}
	\frac{1}{p_{\theta}}:=\frac{1-\theta}{p_0}+\frac{\theta}{p_1}\qquad \text{and}\qquad \frac{1}{q_{\theta}}:=\frac{1-\theta}{q_0}+\frac{\theta}{q_1}.
	%1/p_{\theta}=(1-\theta)/p_0+\theta/p_1, \qquad 1/q_{\theta}=(1-\theta)/q_0+\theta/q_1
	\end{equation*} 
	Suppose that $T:M_n(\com)\to M_m(\com)$ is a bounded linear operator satisfying
	\begin{equation*}
	\|\P^{1/q_k}T(X)\Q^{1/q_k}\|_{q_k}
	\le M_k \|\B^{1/p_k}X\C^{1/p_k}\|_{p_k}, \qquad k=0,1,
	\end{equation*}
	for all $X\in M_n(\com)$, where $M_k<\infty,k=0,1$. Then we have
	\begin{equation*}
	\|\P^{1/q_\theta}T(X)\Q^{1/q_\theta}\|_{q_\theta}
	\le M_0^{1-\theta}M_1^{\theta} \|\B^{1/p_\theta}X\C^{1/p_\theta}\|_{p_\theta},\qquad X\in M_n(\com).
	\end{equation*}
\end{lem}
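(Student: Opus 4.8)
The plan is to prove the statement by a Stein-type complex interpolation argument that reduces this weighted strong-type bound to the weighted three-line estimate of Lemma \ref{lem:three line}. By homogeneity in $X$ I would first normalize so that $\|\B^{1/p_\theta}X\C^{1/p_\theta}\|_{p_\theta}=1$; the case where this norm vanishes forces $X=0$ (because $\B,\C$ are invertible), and then both sides are zero. I would also dispose of the degenerate configuration $q_0=q_1$ separately, since Lemma \ref{lem:three line} is stated for strictly ordered indices, and in that case the same dual-pairing argument applies verbatim and is in fact simpler. So from now on assume $q_0<q_1$, and introduce the affine functions $1/p_z:=(1-z)/p_0+z/p_1$ and $1/q_z:=(1-z)/q_0+z/q_1$ for $z\in\comp$.

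The heart of the matter is to embed $X$ into an analytic family $X(z)$, $z\in\bar S$, with $X(\theta)=X$ and with the correct endpoint control on the $p$-side. I would set $Y:=\B^{1/p_\theta}X\C^{1/p_\theta}$, take its polar decomposition $Y=V|Y|$, and define
\[
Y(z):=V|Y|^{p_\theta/p_z},\qquad X(z):=\B^{-1/p_z}\,Y(z)\,\C^{-1/p_z}.
\]
Here the invertibility of $\B$ and $\C$ is exactly what makes $X(z)$ well defined, and a direct check gives $X(\theta)=\B^{-1/p_\theta}Y\C^{-1/p_\theta}=X$. On the vertical line $\Re z=k\in\{0,1\}$ the difference $1/p_k-1/p_{k+\iu t}$ is purely imaginary, so $\B^{1/p_k-1/p_{k+\iu t}}$ and $\C^{1/p_k-1/p_{k+\iu t}}$ are unitaries; since moreover $|Y(k+\iu t)|=|Y|^{p_\theta/p_k}$, unitary invariance of the Schatten norm yields $\|\B^{1/p_k}X(k+\iu t)\C^{1/p_k}\|_{p_k}=\|Y\|_{p_\theta}^{p_\theta/p_k}=1$ for all $t\in\real$.

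With this family in hand I would put $F(z):=T(X(z))$. Because $T$ is a bounded linear map and $z\mapsto X(z)$ is analytic (the exponents are affine in $z$ and $\B,\C,|Y|$ have bounded complex powers on $\bar S$) with $\sup_{z\in\bar S}\|X(z)\|_\infty<\infty$, the function $F$ is analytic on $S$ and bounded continuous on $\bar S$, as required by Lemma \ref{lem:three line}. Feeding the endpoint identities above into the hypothesis of the present lemma gives, for $k=0,1$ and all $t\in\real$,
\[
\|\P^{1/q_k}F(k+\iu t)\Q^{1/q_k}\|_{q_k}\le M_k\,\|\B^{1/p_k}X(k+\iu t)\C^{1/p_k}\|_{p_k}=M_k,
\]
so the corresponding supremum over $t$ is at most $M_k<\infty$. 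Applying Lemma \ref{lem:three line} to $F$ (with $\h'$ in place of $\h$) using the weights $\P,\Q\in\S_{1/2}$ and the exponents $q_0<q_1$ then bounds $\|\P^{1/q_\theta}F(\theta)\Q^{1/q_\theta}\|_{q_\theta}$ by $M_0^{1-\theta}M_1^{\theta}$; since $F(\theta)=T(X)$, this is precisely the claimed inequality once the normalization is undone.

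The endpoint computations are routine; the step that needs genuine care — and the main obstacle — is verifying that $F$ meets the hypotheses of Lemma \ref{lem:three line}, namely analyticity together with uniform boundedness of $z\mapsto X(z)$ on the whole strip. This rests on the facts that imaginary operator powers are unitary and that invertibility of $\B,\C$ keeps $\|\B^{-1/p_z}\|$ and $\|\C^{-1/p_z}\|$ bounded on $\bar S$, while $\Re(p_\theta/p_z)=p_\theta/p_x$ stays in the bounded range $[p_\theta/p_1,p_\theta/p_0]$ so that $\||Y|^{p_\theta/p_z}\|_\infty$ is controlled. The only remaining bookkeeping is the treatment of the endpoint values $p_1=\infty$ or $q_1=\infty$, where the corresponding exponent $1/p_1$ or $1/q_1$ simply vanishes.
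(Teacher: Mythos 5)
Your interpolation argument is, in its core, identical to the paper's own proof: you build the same analytic family
\begin{equation*}
X(z)=\B^{-\frac{1}{p_z}}\,U\,\bigl|\B^{\frac{1}{p_\theta}}X\C^{\frac{1}{p_\theta}}\bigr|^{\frac{p_\theta}{p_z}}\,\C^{-\frac{1}{p_z}},
\end{equation*}
perform the same endpoint computation via unitarity of the purely imaginary powers of $\B$, $\C$ and $|Y|$, and apply Lemma \ref{lem:three line} to $F(z)=T(X(z))$ with the weights $\P,\Q$ and exponents $q_0,q_1$. The normalization $\|\B^{1/p_\theta}X\C^{1/p_\theta}\|_{p_\theta}=1$ is a harmless repackaging, and your separate treatment of the degenerate case $q_0=q_1$ is a legitimate fine point that the paper passes over silently (Lemma \ref{lem:three line} is indeed stated only for strictly ordered indices), handled correctly.

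There is, however, one genuine gap. You justify the hypothesis $\sup_{z\in\bar S}\|X(z)\|_\infty<\infty$ of Lemma \ref{lem:three line} by asserting that ``invertibility of $\B,\C$ keeps $\|\B^{-1/p_z}\|$ and $\|\C^{-1/p_z}\|$ bounded on $\bar S$.'' That is true only when $\B^{-1}$ and $\C^{-1}$ are \emph{bounded} operators. But the hypothesis places $\B,\C$ in $\S_{1/2}$: whenever $\dim\h=\infty$ such operators are compact, and a positive injective compact operator never has a bounded inverse — so ``invertible'' in the statement must be read as injective, which is exactly how the paper's proof treats it. In that generality your uniform bound on $\|X(z)\|$ fails ($\|\B^{-1/p_x}\|=\infty$ for $x>0$), and $F$ need not be bounded on $\bar S$, so Lemma \ref{lem:three line} cannot be invoked directly. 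The paper closes precisely this hole with a final approximation step: it first proves the inequality under the extra assumption that $\B^{-1},\C^{-1}$ are bounded — the regime in which your argument is valid verbatim — and then handles general $\B,\C$ by applying that case to the spectral truncations $\B 1_{[1/n,n]}(\B)$ and $\C 1_{[1/n,n]}(\C)$ and letting $n\to\infty$. Your write-up needs this (or an equivalent) reduction to be complete; with it added, your proof and the paper's coincide.
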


\begin{proof}
	%By approximation, we may assume that $B,C\in M^{++}_n(\com)$. 
	Let $\B^{1/p_\theta}X\C^{1/p_\theta}=U|\B^{1/p_\theta}X\C^{1/p_\theta}|$ be the polar decomposition. Consider 
	\begin{equation*}
	X(z):=\B^{-1/p_z}U|\B^{1/p_\theta}X\C^{1/p_\theta}|^{p_\theta/p_z}\C^{-1/p_z},
	\end{equation*}
	where $p_z$ is defined as in \eqref{eq:p_z}. Since $T$ is bounded and linear, the function $F(z):=T(X(z))$ is analytic on $S$ and continuous on $\bar S$. By definition, for $x\in[0,1]$ and $t\in\real$,
	\begin{equation}\label{eq:unitaries in X}
	X(x+\iu t)=\B^{-1/p_x}U_t|\B^{1/p_\theta}X\C^{1/p_\theta}|^{p_\theta/p_x}V_t\C^{-1/p_x},
	\end{equation}
	where $U_t$ and $V_t$ are two unitaries. So we have
	\begin{equation*}
	\|X(x+\iu t)\|\le \left(\|\B^{-1}\|\|\B\| \|X\|^{p_\theta}\|\C^{-1}\|\|\C\|\right)^{1/p_x},
	\end{equation*}
	which is bounded from above uniformly in $z=x+\iu t\in\bar{S}$. Since $T$ is bounded, $F(z)=T(X(z))$ is bounded on $\bar{S}$. By assumption, for $k=0,1$ and $t\in\real$,
	\begin{align*}
	\|\P^{1/q_k}F(k+\iu t)\Q^{1/q_k}\|_{q_k}
	\le& M_k \|\B^{1/p_k}X(k+\iu t)\C^{1/p_k}\|_{p_k}\\
	= &M_k \|U_t|\B^{1/p_\theta}X\C^{1/p_\theta}|^{p_\theta/p_k}V_t\|_{p_k}\\
	=&M_k \|\B^{1/p_\theta}X\C^{1/p_\theta}\|_{p_\theta}^{p_\theta/p_k},
	\end{align*}
	where $U_t$ and $V_t$ are  two unitaries in \eqref{eq:unitaries in X} for $x=k$. Now apply Lemma \ref{lem:three line} to $F$, and we finish the proof.
\end{proof}

\begin{lem}\label{lem:casep=2}
	Fix $m,n\ge 1$. Let $0< \alpha,\beta< 1 $ and $\alpha+\beta\le 1$. Suppose that $A,B\in M^{++}_n(\com)$ and $\phi:M_n(\com)\to M_m(\com)$ is a unital map such that $\phi^\dagger$ is Schwarz and $\phi(A),\phi(B)\in M^{++}_m(\com)$. Then for any $X\in M_n(\com)$, we have 
	\begin{equation}\label{ineq:casep=2}
	\tr \left[\phi(X)^\ast\phi(A)^{-\alpha}\phi(X)\phi(B)^{-\beta} \right]
	\le \tr\left[X^\ast A^{-\alpha} X B^{-\beta}\right],
	\end{equation}
	for all $X\in M_n(\com)$.

\end{lem}

\begin{proof}
	Note first that it suffices to prove \eqref{ineq:casep=2} for $0< \alpha,\beta< 1 $ and $\alpha+\beta=1$. In fact, fix $0<a,b\le 1$. Since $\phi$ is unital, an operator Jensen type inequality for unital positive maps \cite[Theorem 2.1]{Choi74} (see also \cite[Proposition 2.7.1]{Bhatia09pdm}) gives $\phi(B^{a})\le \phi(B)^{a}$. Note that $x\mapsto x^{-b}$ is operator monotone decreasing, we get $\phi(B)^{-ab}\le \phi(B^{a})^{-b}$. Choose $(a,b)=(\beta/(1-\alpha),1-\alpha)$ and we deduce $\phi(B)^{-\beta}\le \phi(B^{\beta/(1-\alpha)})^{-(1-\alpha)}$. So
	\begin{equation*}
	\tr \left[\phi(X)^\ast\phi(A)^{-\alpha}\phi(X)\phi(B)^{-\beta} \right]
	\le \tr\left[\phi(X)^\ast\phi(A)^{-\alpha}\phi(X)\phi(B^{\beta/(1-\alpha)})^{-(1-\alpha)} \right],
	\end{equation*}
	which will be bounded from above by the right hand side of \eqref{ineq:casep=2} if we could prove \eqref{ineq:casep=2} for $0< \alpha,\beta< 1 $ such that $\alpha+\beta=1$.
	
	The $\alpha+\beta=1$ case follows from \cite[Theorem 11]{CMH22schwarz}. 
\end{proof}

Now we are ready to prove Theorem \ref{thm:conj1UCPTP}.

\begin{proof}[Proof of Theorem \ref{thm:conj1UCPTP}]
	Applying Lemma \ref{lem:Riesz-Thorin} to $$(\B,\C,\P,\Q)=(B^{-ps},C^{-qs},\phi(B)^{-ps},\phi(C)^{-qs}),$$
	we know that 
	the desired result will follow from 
	\begin{equation}\label{ineq:r}
	\|\phi(B)^{-ps/r}\phi(A)\phi(C)^{-qs/r}\|_{r}\le \|B^{-ps/r}AC^{-qs/r}\|_{r}
	\end{equation} 
	for $r=2$ and $r=\infty$, since $2\le s<\infty$. When $r=2$, it becomes:
	\begin{equation*}
	\tr \left[\phi(A)^\ast\phi(B)^{-\alpha}\phi(A)\phi(C)^{-\beta} \right]
	\le \tr\left[A^\ast B^{-\alpha} A C^{-\beta}\right],
	\end{equation*}
	with $\alpha=ps$ and $\beta=qs$. By assumption, $0<\alpha,\beta<1$ and $\alpha+\beta\le 1$. So \eqref{ineq:r} holds for $r=2$ by Lemma \ref{lem:casep=2}.
	The inequality \eqref{ineq:r} for $r=\infty$ is a consequence of Russo--Dye Theorem (see \cite[Theorem 2.3.7]{Bhatia09pdm} and \cite{RD66duke}), since $\phi$ is unital, linear and positive. So the proof is finished. 
\end{proof}

\begin{rem}
	The argument here is similar to Beigi's proof of data processing inequality for certain sandwiched R\'enyi relative entropies \cite{beigi2013sandwiched}. The endpoint case $r=2$ here is non-trivial. 
\end{rem}

\section{Non-concavity and non-convexity results}
\label{sect:non-concavity}
There are several reasons of expecting the concavity of 
\begin{equation}\label{eq:defn of Psi}
\Psi(A):=\tr |K_1 A^p K_2|^{1/p}, \qquad A\in M^{+}_n(\com)
\end{equation}
for $0<p<1/2$ and any $K_1,K_2\in M^{+}_n(\com)$. In \cite{Chehade20} Chehade proposed a strategy to obtain equality conditions of data processing inequality for $\alpha-z$ R\'enyi relative entropies, where the concavity of \eqref{eq:defn of Psi} is the key. For more about the $\alpha-z$ R\'enyi relative entropies and their data processing inequalities, see \cite{AD15alpha-z,Zhang20} and references therein. 

%The argument (see \cite[Proposition 3.1.6]{Chehade20} and proof of \cite[Proposition 3.1.12]{Chehade20}) is based on the concavity of 
%\begin{equation*}
%X\mapsto \tr \left(\Phi_Z (X^{\frac{p}{2}}) A \Phi_Z (X^{\frac{p}{2}})\right)^{\frac{1}{p}}
%\end{equation*}
%for $0<p<1$, $A\in\ph$ and $Z\in\bh$, where $\Phi_Z(X):=Z^\ast XZ$. Equivalently, it requires the concavity of 
 Another application of this concavity result (if it were true) is the concavity of 
\begin{equation}\label{concavity-unconditional}
\real_+^n\ni (u_1,\dots,u_n)\mapsto |\sum_{j=1}^{n} u_j^{p}r_j|^{1/p}
\end{equation}
for any $\{r_j\}_{j=1}^{n}\subset\real$ and any $n\ge 1$. To see this, just take $K_1=\sum_{j=1}^{n}\ket{1}\bra{j}$, $A=\sum_{j=1}^{n}u_j\ket{j}\bra{j}$ and $K_2=\sum_{j=1}^{n}r_j\ket{j}\bra{1}$. The concavity of \eqref{concavity-unconditional} will imply the concavity of 
\begin{equation*}
\real_+^n\ni (u_1,\dots,u_n)\mapsto \int|\sum_{j=1}^{n} u_j^{p}r_j|^{1/p}.
\end{equation*}
Here $r_j$'s are Rademacher functions. So one may prove that the $1/p$-concavity constant of the Rademacher sequence in $L_{1/p}$ is $1$, which was left open in \cite{Schechtman95remarks}. 
%But this looks too strong; see \cite{Schechtman95remarks} for more details.

One evidence of supporting the concavity of \eqref{eq:defn of Psi} is the case when $K_1=K_2^\ast$ (or $K_1=\un$). In this case we know well the concavity of $\Psi$. See \cite{Zhang20} or the following Remark \ref{remark:K1 K2} for example. Unfortunately and surprisingly, for general $K_1$ and $K_2$, $\Psi$ is not concave anymore. 

More generally we have the non-concavity Theorem \ref{thm:non-concavity} as stated in the Introduction. 

\begin{proof}[Proof of Theorem \ref{thm:non-concavity}]
	We are not going to find any particular examples to disprove the concavity. The proof is based on some concavity and non-concavity results that are combined via the variational Lemma \ref{lem:variation}.
	
	Recall that $p\ne 0$ and $s>2$. Set $(r_0,r_2):=(2,s)$. Since $s>2$, we can choose $2<r_1,r_3<\infty$ such that $1/r_0=\sum_{j=1}^{3}1/r_j$. Here the exact values of $r_1$ and $r_3$ are not important. Now we apply the variational formula \eqref{eq:variational-min} to obtain
	\begin{equation}\label{eq:varional-contradiction}
	\tr| B^{q} A^{p} C^{r}|^{2}=\min_{X,Y\in M^{\times}_n(\com)} \left\{\frac{r_0}{r_1}\tr|B^{q} X|^{r_1}+\frac{r_0}{r_2}\tr|X^{-1}A^{p} Y^{-1}|^{r_2}+\frac{r_0}{r_3}\tr|YC^{r}|^{r_3}\right\},
	\end{equation} 
	where $(q,r)=(1/r_1,1/r_3)$.
	
	Since $0<q=1/r_1,r=1/r_3< 1/2$, the functionals 
	\begin{equation*}
	B\mapsto \tr|B^{q} X|^{r_1}=\tr(X^\ast B^{2q} X)^{1/(2q)}
	\end{equation*}
	and 
	\begin{equation*}
	C\mapsto \tr|YC^r|^{r_3}=\tr(Y C^{2r} Y^\ast)^{1/(2r)}
	\end{equation*}
	are concave on $M^{+}_n(\com)$ for any $X,Y\in M_n(\com)$ and any $n\ge 1$. See again \cite{Epstein73,CFL18survey,Zhang20} for proofs and related results.
	
	If the assertion of this theorem is false, i.e. $\Psi_{p,s}$ is concave for any $K_1,K_2\in M_n(\com)$ and any $n\ge 2$, then the functional
	\begin{equation*}
	A\mapsto \tr|X^{-1}A^{p} Y^{-1}|^{r_2}=\tr|X^{-1}A^{p} Y^{-1}|^{s}
	\end{equation*}
	is concave for any $X,Y\in M^{\times}_n(\com)$. All combined, we deduce the joint concavity of 
	\begin{equation*}
	\Phi(A,B,C):=\tr| B^{q} A^{p} C^{r}|^{2}=\tr [A^{p} B^{2q}A^{p} C^{2r}],
	\end{equation*} 
	for such (non-zero) $p,q,r$. Here we used \eqref{eq:varional-contradiction} and the fact that concavity is stable under taking minimum (see \cite[Lemma 3.2]{Zhang20} for a proof). However, by a result of Carlen, Frank and Lieb \cite[Corollary 3.3]{CFL16some}, $\Phi$ is never concave in $(A,B,C)$ for non-zero $p,q,r$. This leads to a contradiction and completes the proof. 
\end{proof}

\begin{rem}\label{remark:K1 K2}
	When $K_2=K_1^\ast$ and $s>0$, we know that $\Psi_{p,s}(A)=\tr(K_1 A^p K_1^\ast)^s$ is concave if and only if $0<p\le 1$ and $s\le 1/p$ \cite{Hiai13,CFL18survey,Zhang20}. So the most interesting part of this theorem is the case $0<p<1/2$ and $s>2$.
\end{rem}

\begin{rem}
	Here we argued by contradiction. In \cite{vershynina2022convexity} another proof is given using concrete counterexamples. 
\end{rem}

With the non-convexity results of triple-variable functionals in \cite{CFL16some}, we can derive the following non-convexity result of one-variable functionals in a similar way. This time we only state the most interesting part. 

\begin{thm}\label{thm:non-convexity}
	For any $1/2\le p<1$ and $1/p\le s<2$, there exist $n\ge 2$ and $K_1,K_2\in M_n(\com)$ such that $\Psi_{p,s}(A)=\tr |K_1 A^p K_2|^{s}$ is not convex.
\end{thm}

\begin{proof}
	Since $1/s\in (1/2,p]\subset (1/2,1]$, we can find $r_1,r_3>0$ such that $1/r_1,1/r_3\in (0,1/2]$ such that $1/s=\sum_{j=1}^{3}1/r_j$ with $r_2=2$. Let $q=1/r_1$ and $r=1/r_3$. By \eqref{eq:variational-max}, we have
	\begin{equation}
	\begin{aligned}\label{eq:variation-nonconvex}
	\tr |B^{-q}A^p C^{-r}|^{r_2}
	=\max_{X,Y\in M^{\times}_n(\com)}\left\{\frac{r_2}{s}\tr|XA^pY|^{s}-\frac{r_2}{r_1}\tr|XB^{q}|^{r_1}-\frac{r_2}{r_3}\tr|C^{r}Y|^{r_3}\right\}.
	\end{aligned}
	\end{equation}
	Since $0<q,r\le 1/2$, the functionals (\cite{Epstein73,CFL18survey,Zhang20})
	\begin{equation*}
	B\mapsto \tr|XB^{q}|^{r_1}=\tr(X B^{2q} X^\ast)^{1/(2q)}
	\end{equation*}
	and 
	\begin{equation*}
	C\mapsto \tr|C^rY|^{r_3}=\tr(Y^\ast C^{2r} Y)^{1/(2r)}
	\end{equation*}
	are concave in $M^{+}_n(\com)$ for any $X,Y\in M_n(\com)$ and any $n\ge 1$. If the assertion is false, i.e. $\Psi_{p,s}(A)=\tr |K_1 A^p K_2|^{s}$ is convex for all $K_1,K_2\in M_n(\com)$ and all $n\ge 2$, then the functional 
	\begin{equation*}
	A\mapsto \tr|XA^pY|^{s}
	\end{equation*}
	is convex for any $X,Y\in M^{\times}_n(\com)$. All combined, we deduce the joint convexity of 
	\begin{equation*}
	\Phi(A,B,C):=\tr| B^{-q} A^{p} C^{-r}|^{2}=\tr [A^{p} B^{-2q}A^{p} C^{-2r}].
	\end{equation*}
	However, by a result of Carlen, Frank and Lieb \cite[Corollary 3.3]{CFL16some}, $\Phi$ is jointly convex only if $p=1$, $q,r>0$ and $q+r\le 1$. This leads to a contradiction and completes the proof. 
\end{proof}

\section{On Conjecture \ref{conjecture4}}
\label{sect:otherconjectures}
We first explain why Conjecture \ref{conjecture4} can not hold in general. This is based on work of Morozova--Chentsov \cite{MC89russian}, Petz \cite{Petz96metric}, and \cite{Kumagai11extented} on the (extended) monotone metrics. Recall that \cite[Definition 2.2]{Kumagai11extented} 
$$K:\cup_{n=1}^{\infty}(M_n(\com)\times M_n(\com)\times M_n^{++}(\com))\to \com,\quad (A,B,D)\mapsto K_D(A,B)$$
is an {\em extended monotone metric} if the following conditions are satisfied:
\begin{enumerate}
	\item $(A,B)\mapsto K_D(A,B)$ is sesquilinear for every $D\in M_n^{++}(\com)$.
	\item $K_D(A,A)\ge 0$ and the equality holds if and only if $A=0$.
	\item $D\mapsto K_D(A,A)$ is continuous on $M_n^{++}(\com)$ for every $A\in M_n(\com)$.
	\item $K_{\phi(D)}(\phi(A),\phi(A))\le K_D(A,A)$ for every completely positive trace preserving map $\phi: M_n(\com)\to M_m(\com)$, $D\in M_n^{++}(\com)$, $A\in M_n(\com)$ and $m,n\ge 1.$ 
\end{enumerate}

According to \cite[Theorem 4.1]{RZ14} and its proof, 
\begin{align*}
\langle A,B\rangle_D=&(1-\frac{p}{2})\tr [|G|^{p-2}(G^\ast A+A^\ast G)]\cdot \tr [|G|^{p-2}(G^\ast B+B^\ast G)]\\
&+\tr [|G|^{p-2}(A^\ast B+B^\ast A)]+\frac{p-2}{2}\int_{0}^{1}ds\int_{0}^{\infty}dt \\
&\cdot \tr\left[\frac{|G|^{s(p-2)}}{t+|G|^2}(G^\ast B+B^\ast G)\frac{|G|^{(1-s)(p-2)}}{t+|G|^2}(G^\ast A+A^\ast G)\right]
\end{align*}
and 
\begin{align*}
\langle A,A\rangle_D
\ge \tr [|G|^{p-2}(A^\ast A+A^\ast A)]
\ge 0,
\end{align*}
where $G=D/\|D\|_p$.
From this, we see that the conditions (1-3) are satisfied. 

\begin{thm}
	Conjecture \ref{conjecture4} fails. 
\end{thm}

\begin{proof}
Suppose that Conjecture \ref{conjecture4} is true. Then the above discussion shows that $(A,B,D)\mapsto \langle A,B\rangle_D$ is an extended monotone metric.

Kumagai \cite[Theorem 3.2]{Kumagai11extented} characterized extended monotone metric, and we recall here only the classical version \cite[Lemma 3.1]{Kumagai11extented} due to Campbell \cite{campbell1986extended}: there exists a uniquely a pair of a continuous function $b:\real_+\to \real$ and a continuous positive function $c:\real_+\to \real_+$ such that 
\begin{equation*}
\langle A,B\rangle_D=b(\tr D)(\tr A)^\ast (\tr B)+(\tr D)c(\tr D)\tr[D^{-1}A^\ast B],
\end{equation*}
where $A,B\in M_n(\com)$ and $D\in M_n^{++}(\com)$ are mutually commutative.  So when $A,B,D\in M_n^{++}(\com)$ are all diagonal:
\begin{align*}
&2(2-p)\|D\|_p^{2-2p}\tr[D^{p-1}A] \tr[D^{p-1}B]+2(p-1)\|D\|_p^{2-p}\tr [D^{p-2}AB]\\
=&b(\tr D)(\tr A)^\ast (\tr B)+(\tr D)c(\tr D)\tr[D^{-1}A^\ast B].
\end{align*}
Choosing $A=B=D\in M_n^{++}(\com)$, we get
\begin{equation*}
2\|D\|_p^2=(\tr D)^2[ b(\tr D)+c(\tr D)].
\end{equation*}
When $n\ge 2$, we may find $D_1,D_2\in M_n^{++}(\com)$ such that $\tr D_1= \tr D_2=x$ while $2\|D_1\|_p^2=y\neq z=2\|D_2\|_p^2$. Then we deduce
\begin{equation*}
y=x^2[b(x)+c(x)]=z,
\end{equation*}
which leads to a contradiction. This finishes the proof. 
\end{proof}

However, a problem similar to Conjecture \ref{conjecture4} has a positive answer: the monotonicity under unital maps whose dual are Schwarz holds if we replace the square with $p$-th power for $1<p<2$. This was essentially obtained in \cite{Li20sobolev} and let us discuss it further. For any function $\F:(0,\infty)\times (0,\infty)\to (0,\infty)$, and for any $A,B\in M^{++}_n(\com)$ with spectral decompositions $A=\sum_{j}\lambda_j E^A_j$ and $B=\sum_{k}\mu_k E^B_k$, put
\begin{equation}
Q_{\F}^{A,B}(X):=\sum_{j,k}\F(\lambda_j,\mu_k) E^A_j X E^B_k
=F(L_A,R_B)X,
\end{equation}
where $L_A$ and $R_B$ are the left and right multiplication operators, respectively. 
For any $\f\in C^2(0,\infty)$, consider the trace functional 
\begin{equation*}
\Xi_{\f}(D,A):=\left.\dfrac{d^2}{ds^2}\right|_{s=0}\tr f(D+sA)=\left.\dfrac{d}{ds}\right|_{s=0}\tr[f'(D+sA)A].
\end{equation*}
A direct computation shows
\begin{align*}
\Xi_{\f}(D,A)
=&\left\langle A,Q^{D,D}_{\f'^{[1]}}(A)\right\rangle,
\end{align*}
where $g^{[1]}$ denotes the difference quotient of $g$:
\begin{equation*}
g^{[1]}(s,t):=\begin{cases*}
\frac{g(s)-g(t)}{s-t}& $s\neq t$\\
g'(s)&$s=t$
\end{cases*}.
\end{equation*}

\begin{lem}\label{lem:direct computation}
	Let $F:(0,\infty)\times (0,\infty)\to (0,\infty)$.
	\begin{enumerate}
		\item 	For $D_{ij}\in M_{n}^{++}(\com), i,j=1,2$, $A_k\in M_{n}(\com), k=1,2$, and
		\begin{equation}\label{eq:defn of D1 D2 A}
		D_1=\begin{pmatrix*}
		D_{11} & 0\\
		0& D_{12}
		\end{pmatrix*},
		\quad 
		D_2=\begin{pmatrix*}
		D_{21} & 0\\
		0& D_{22}
		\end{pmatrix*},
		\quad
		A=\begin{pmatrix*}
		A_{1} & 0\\
		0& A_{2}
		\end{pmatrix*},
		\end{equation}
		we have 
		\begin{equation}\label{eq:sum}
		\left\langle Q^{D_1,D_2}_F(A),A\right\rangle = \left\langle Q^{D_{11},D_{21}}_F(A_1),A_1\right\rangle + \left\langle Q^{D_{12},D_{22}}_F(A_2),A_2\right\rangle.
		\end{equation}
		\item For $D_1,D_2\in M_{n}^{++}(\com), X\in M_n(\com)$, and 
		\begin{equation}\label{eq:defn of AD}
		D=\begin{pmatrix*}
		D_{1} & 0\\
		0& D_{2}
		\end{pmatrix*},
		\quad 
		A=\begin{pmatrix*}
		0 & X\\
		0& 0
		\end{pmatrix*},
		\end{equation}
		we have 
		\begin{equation}\label{eq:symmetry}
		\left\langle Q^{D,D}_F(A),A\right\rangle = \left\langle Q^{D_{1},D_{2}}_F(X),X\right\rangle.
		\end{equation}
	\end{enumerate}

\end{lem}

\begin{comment}
Then 
\begin{align*}
Q_F^{D,D}
\begin{pmatrix*}
0&X\\
0&0
\end{pmatrix*}
&=\sum_{j,k}F(\lambda^{(1)}_{j},\lambda^{(1)}_{k})
\begin{pmatrix*}
E^{D_1}_j &0\\
0&0
\end{pmatrix*}
\begin{pmatrix*}
0 &X\\
0&0
\end{pmatrix*}
\begin{pmatrix*}
E^{D_1}_k &0\\
0&0
\end{pmatrix*}	\\
&+\sum_{j,k}F(\lambda^{(1)}_{j},\lambda^{(2)}_{k})
\begin{pmatrix*}
E^{D_1}_j &0\\
0&0
\end{pmatrix*}
\begin{pmatrix*}
0 &X\\
0&0
\end{pmatrix*}
\begin{pmatrix*}
0 &0\\
0&E^{D_2}_k
\end{pmatrix*}	\\
&+\sum_{j,k}F(\lambda^{(2)}_{j},\lambda^{(1)}_{k})
\begin{pmatrix*}
0 &0\\
0&E^{D_2}_j
\end{pmatrix*}
\begin{pmatrix*}
0 &X\\
0&0
\end{pmatrix*}
\begin{pmatrix*}
E^{D_1}_k &0\\
0&0
\end{pmatrix*}	\\
&+\sum_{j,k}F(\lambda^{(2)}_{j},\lambda^{(2)}_{k})
\begin{pmatrix*}
0 &0\\
0&E^{D_2}_j
\end{pmatrix*}
\begin{pmatrix*}
0 &X\\
0&0
\end{pmatrix*}
\begin{pmatrix*}
0 &0\\
0&E^{D_2}_k
\end{pmatrix*}	\\
&=\sum_{j,k}F(\lambda^{(1)}_{j},\lambda^{(2)}_{k})
\begin{pmatrix*}
0 &E^{D_1}_j XE^{D_2}_k\\
0&0
\end{pmatrix*}\\
&=\begin{pmatrix*}
0 &Q_{F}^{D_1,D_2}X\\
0&0
\end{pmatrix*}
\end{align*}
This gives \eqref{eq:symmetry}.
\end{comment}

\begin{proof}
	Both (1) and (2) follow from the following identity: 
	\begin{equation}\label{eq:general}
	Q_F^{D_1,D_2}
	\begin{pmatrix*}
	A_{11}&A_{12}\\
	A_{21}&A_{22}
	\end{pmatrix*}
	=\begin{pmatrix*}
	Q^{D_{11},D_{21}}_F(A_{11})&Q^{D_{11},D_{22}}_F(A_{12})\\
	Q^{D_{12},D_{21}}_F(A_{21})&Q^{D_{12},D_{22}}_F(A_{22})
	\end{pmatrix*},
	\end{equation}
	where $D_{ij},A_{ij}\in M_n^{++}(\com), i,j=1,2$ and 
		\begin{equation*}
	D_1=\begin{pmatrix*}
	D_{11} & 0\\
	0& D_{12}
	\end{pmatrix*},
	\quad 
	D_2=\begin{pmatrix*}
	D_{21} & 0\\
	0& D_{22}
	\end{pmatrix*}.
	\end{equation*}
	The proof of \eqref{eq:general} is a direct computation. Let $D_{lm}=\sum_{j} \lambda^{(lm)}_{j}E^{lm}_j$ be the spectral decomposition. Then by definition, we have
	\begin{align*}
	&Q_F^{D_1,D_2}
	\begin{pmatrix*}
	A_{11}&A_{12}\\
	A_{21}&A_{22}
	\end{pmatrix*}\\
	=&\sum_{j,k}F(\lambda^{(11)}_{j},\lambda^{(21)}_{k})
	\begin{pmatrix*}
	E^{11}_j &0\\
	0&0
	\end{pmatrix*}
	\begin{pmatrix*}
	A_{11}&A_{12}\\
	A_{21}&A_{22}
	\end{pmatrix*}
	\begin{pmatrix*}
	E^{21}_k &0\\
	0&0
	\end{pmatrix*}	\\
	&+\sum_{j,k}F(\lambda^{(11)}_{j},\lambda^{(22)}_{k})
	\begin{pmatrix*}
	E^{11}_j &0\\
	0&0
	\end{pmatrix*}
	\begin{pmatrix*}
	A_{11}&A_{12}\\
	A_{21}&A_{22}
	\end{pmatrix*}
	\begin{pmatrix*}
	0 &0\\
	0&E^{22}_k
	\end{pmatrix*}	\\
	&+\sum_{j,k}F(\lambda^{(12)}_{j},\lambda^{(21)}_{k})
	\begin{pmatrix*}
	0 &0\\
	0&E^{12}_j
	\end{pmatrix*}
	\begin{pmatrix*}
	A_{11}&A_{12}\\
	A_{21}&A_{22}
	\end{pmatrix*}
	\begin{pmatrix*}
	E^{21}_k &0\\
	0&0
	\end{pmatrix*}	\\
	&+\sum_{j,k}F(\lambda^{(12)}_{j},\lambda^{(22)}_{k})
	\begin{pmatrix*}
	0 &0\\
	0&E^{12}_j
	\end{pmatrix*}
	\begin{pmatrix*}
	A_{11}&A_{12}\\
	A_{21}&A_{22}
	\end{pmatrix*}
	\begin{pmatrix*}
	0 &0\\
	0&E^{22}_k
	\end{pmatrix*}	\\
	=&\sum_{j,k}F(\lambda^{(11)}_{j},\lambda^{(21)}_{k})
	\begin{pmatrix*}
	E^{11}_j A_{11}E^{21}_k &0\\
	0&0
	\end{pmatrix*}
	+\sum_{j,k}F(\lambda^{(11)}_{j},\lambda^{(22)}_{k})
	\begin{pmatrix*}
	0 &E^{11}_j A_{12}E^{22}_k\\
	0&0
	\end{pmatrix*}\\
	&+\sum_{j,k}F(\lambda^{(12)}_{j},\lambda^{(21)}_{k})
	\begin{pmatrix*}
	0 &0\\
	E^{12}_j A_{21}E^{21}_k&0
	\end{pmatrix*}
	+\sum_{j,k}F(\lambda^{(12)}_{j},\lambda^{(22)}_{k})
	\begin{pmatrix*}
	0 &0\\
	0&E^{12}_j A_{22}E^{22}_k
	\end{pmatrix*}\\
	=&\begin{pmatrix*}
	Q^{D_{11},D_{21}}_F(A_{11})&Q^{D_{11},D_{22}}_F(A_{12})\\
	Q^{D_{12},D_{21}}_F(A_{21})&Q^{D_{12},D_{22}}_F(A_{22})
	\end{pmatrix*}.
	\end{align*}
\end{proof}

\begin{thm}\label{thm:double operator}
	Let $f\in C^2(0,\infty)$ be such that $f''$ is continuous positive function on $(0,\infty)$.
	Consider the following statements:
	\begin{enumerate}
		\item For all unital maps $\phi: M_n(\com)\to M_m(\com)$ whose dual $\phi^\dagger$ are Schwarz, all $(D_1,D_2,A)\in M_n^{++}\times M_n^{++}(\com)\times M_n(\com)$ and all $m,n\ge 1$, we have
		\begin{equation*}
			\left\langle \phi(A),Q^{\phi(D_1),\phi(D_2)}_{\f'^{[1]}}(\phi(A))\right\rangle\le \left\langle A,Q^{D_1,D_2}_{\f'^{[1]}}(A)\right\rangle.
		\end{equation*}
		\item The function $(A,D_1,D_2)\mapsto \left\langle A,Q^{D_1,D_2}_{\f'^{[1]}}(A)\right\rangle$ is jointly convex in $M_n(\com)\times M^{++}_n(\com)\times M^{++}_n(\com)$.
		\item The function $(A,D)\mapsto \left\langle A,Q^{D,D}_{\f'^{[1]}}(A)\right\rangle$ is jointly convex in $M_n(\com)\times M^{++}_n(\com)$.
		\item $f'$ is operator concave, or equivalently \cite[Theorem 3.3]{BS1955monotone}, for any $x_0\in (0,\infty)$, the function 
		\begin{equation*}
		x\mapsto \frac{f'(x)-f'(x_0)}{x-x_0}
		\end{equation*}
		is operator monotone decreasing.
		\item $f'':(0,\infty)\to (0,\infty)$ is operator monotone decreasing, or equivalently (proof of main theorem in \cite{hansen2006trace} or \cite[Theorem 3.1]{AH11log}), it admits the integral representation
		\begin{equation}\label{eq:integral rep}
		f''(x)=a+\int_{[0,\infty)}\frac{\lambda+1}{\lambda+x}d\mu(\lambda),
		\end{equation}
		where $a\ge 0$ and $\mu$ is some finite Borel measure on $[0,\infty)$.
	\end{enumerate}
	 Then (5) $\implies $ (1) $\implies$ (2) $\Longleftrightarrow$ (3) $\implies$ (4).
\end{thm} 

\begin{rem}
	For $f(x)=x^p,1<p<2$, we have 
	\begin{equation*}
	f''(x)=p(p-1)x^{p-2}=\frac{p(p-1)\sin (p-1)\pi}{\pi}\int_{0}^{\infty}\frac{\lambda^{p-2}}{\lambda+x}d\lambda.
	\end{equation*}
	So \eqref{eq:integral rep} holds with $a=0$ and 
	$$d\mu(\lambda)=\frac{p(p-1)\sin (p-1)\pi}{\pi}\cdot\frac{\lambda^{p-2}}{\lambda+1}d\lambda.$$
\end{rem}

\begin{proof}[Proof of Theorem \ref{thm:double operator}]
	To show  (5) $\implies $ (1), note that by the integral representation \eqref{eq:integral rep}: for $x,y>0$:
\begin{align*}
f'(x)-f'(y)
&=a(x-y)+\int_{y}^{x}\int_{[0,\infty)}\frac{\lambda+1}{\lambda+t}d\mu(\lambda)dt\\
&=a(x-y)+\int_{[0,\infty)}(\lambda+1)(\log(\lambda+x)-\log(\lambda+y))d\mu(\lambda).
\end{align*}
This, together with the integral representation
\begin{equation*}
\frac{\log x-\log y}{x-y}=\int_{0}^{\infty}\frac{1}{(x+s)(y+s)}ds,
\end{equation*}
yields
\begin{equation*}
\frac{f'(x)-f'(y)}{x-y}=a+\int_{0}^{\infty}\int_{[0,\infty)}\frac{\lambda+1}{(\lambda+x+s)(\lambda+y+s)}d\mu(\lambda)ds.
\end{equation*}
Therefore,
\begin{align*}
&\left\langle Q^{D_1,D_2}_{\f'^{[1]}}(A),A\right\rangle=a\tr[AA^\ast]\\
 &\qquad \qquad \qquad +\int_{[0,\infty)}(\lambda+1)\int_{0}^{\infty}\tr\left[\frac{1}{\lambda+s+D_1}A\frac{1}{\lambda+s+D_2}A^\ast \right]dsd\mu(\lambda).
\end{align*}
and thus
\begin{align*}
&\left\langle Q^{\phi(D_1),\phi(D_2)}_{\f'^{[1]}}(\phi(A)),\phi(A)\right\rangle=a\tr[\phi(A)\phi(A)^\ast]\\
&\qquad \qquad  +\int_{[0,\infty)}(\lambda+1)\int_{0}^{\infty}\tr\left[\frac{1}{\lambda+s+\phi(D_1)}\phi(A)\frac{1}{\lambda+s+\phi(D_2)}\phi(A)^\ast \right]dsd\mu(\lambda).
\end{align*}
So to prove (5) $\implies $ (1) it remains to show 
\begin{equation*}
\tr[\phi(A)\phi(A)^\ast]\le \tr[AA^\ast],
\end{equation*}
and for all $\lambda >0$:
\begin{align*}
&\int_{0}^{\infty}\tr\left[\frac{1}{\lambda+s+\phi(D_1)}\phi(A)\frac{1}{\lambda+s+\phi(D_2)}\phi(A)^\ast \right]ds\\
&\qquad \qquad \qquad \qquad \le \int_{0}^{\infty}\tr\left[\frac{1}{\lambda+s+D_1}A\frac{1}{\lambda+s+D_2}A^\ast \right]ds.
\end{align*}
The former is equivalent to 
\begin{equation*}
\tr[\phi^\dagger(B)\phi^\dagger(B)^\ast]\le \tr[BB^\ast],
\end{equation*}
in view of $\phi^\dagger\phi\le \text{id}_n \Leftrightarrow \phi\phi^\dagger\le \text{id}_m$. Since $\phi$ is unital and $\phi^\dagger$ is Schwarz:
\begin{equation*}
\tr[\phi^\dagger(B)\phi^\dagger(B)^\ast]\le \tr[\phi^\dagger(BB^\ast)]=\tr[BB^\ast].
\end{equation*}
The latter follows from a monotonicity version of a theorem of Lieb \cite[Theorem 11]{CMH22schwarz} that extends the results in \cite{HP12quasi}. In fact, since $\phi$ is unital,
\begin{align*}
&\int_{0}^{\infty}\tr\left[\frac{1}{\lambda+s+\phi(D_1)}\phi(A)\frac{1}{\lambda+s+\phi(D_2)}\phi(A)^\ast \right]ds\\
&\qquad \qquad \qquad \qquad =\int_{0}^{\infty}\tr\left[\frac{1}{s+\phi(\lambda+D_1)}\phi(A)\frac{1}{s+\phi(\lambda+D_2)}\phi(A)^\ast \right]ds.
\end{align*}
Then applying \cite[Theorem 11]{CMH22schwarz} to $f(x)=\int_{0}^{1}x^sds=\frac{x-1}{\log x}$ (see also \cite[Example 5]{HP12quasi}), we have for Schwarz map $\phi^\dagger$ that
\begin{align*}
&\int_{0}^{\infty}\tr\left[\frac{1}{s+\phi(\lambda+D_1)}\phi(A)\frac{1}{s+\phi(\lambda+D_2)}\phi(A)^\ast \right]ds\\
&\qquad \qquad \qquad \qquad  \le \int_{0}^{\infty}\tr\left[\frac{1}{\lambda+s+D_1}A\frac{1}{\lambda+s+D_2}A^\ast \right]ds.
\end{align*}
This finishes the proof of  (5) $\implies $ (1).
 
 Now we show $(1) \implies (2)$. Denote $g(D_1,D_2,A):=\left\langle A,Q^{D_1,D_2}_{\f'^{[1]}}(A)\right\rangle$. Consider the map $\phi$ given in \eqref{eq:phi}. Then it is unital and its dual is Schwarz. Now for any 
 $$(D_{11}, D_{12},A_1), (D_{21},D_{22},A_2)\in M^{++}_n(\com)\times M^{++}_n(\com)\times M_n(\com),$$
 we form the operators $(D_1,D_2,A)\in M^{++}_{2n}(\com)\times M^{++}_{2n}(\com)\times M_{2n}(\com)$ as in \eqref{eq:defn of D1 D2 A}. Then by \eqref{eq:sum} we have
 $$g(\phi(D_1),\phi(D_2),\phi(A))=2g\left(\frac{D_{11}+D_{12}}{2},\frac{D_{21}+D_{22}}{2},\frac{A_1+A_2}{2}\right)$$
 and 
 $$g(D_1,D_2,A)=g(D_{11},D_{21},A_1)+g(D_{12},D_{22},A_2).$$
So (1) implies the mid-point joint convexity, and thus the usual joint convexity by continuity, of $g$. This proves (2).
 
The implication $(2)\implies (3)$ is trivial. The converse follows from the following identity (see \eqref{eq:symmetry})
\begin{equation*}
\left\langle A,Q^{D_1,D_2}_{\f'^{[1]}}(A)\right\rangle=\left\langle X,Q^{D,D}_{\f'^{[1]}}(X)\right\rangle,
\end{equation*}
with 
\begin{equation*}
D=\begin{pmatrix*}
D_1& 0\\
0&D_2
\end{pmatrix*}
\qquad \text{and} \qquad
X=\begin{pmatrix*}
0& A\\
0&0
\end{pmatrix*}.
\end{equation*}

It remains to show $(3)\implies (4)$. For this take $D_2=x_0 \un$. Then $Q^{D_1,D_2}_{f'^{[1]}}=h(L_{D_1})$ and thus 
$$\left\langle A,Q^{D_1,D_2}_{f'^{[1]}}(A)\right\rangle=\left\langle A,h(L_{D_1})A\right\rangle,$$
where $h(x)=(f'(x)-f'(x_0))/(x-x_0)$ and $L_{D_1}(X):=D_1X$. By a standard argument \cite{HP12quasi}, $h$ is operator monotone decreasing. In fact, for any $\xi\in \com^n$ take $A=A_\xi\in M_n(\com)$ as the matrix with $\xi$ being the first column and zeros elsewhere. Then we have
$$\left\langle A_\xi,Q^{D_1,D_2}_{f'^{[1]}}(A_\xi)\right\rangle=\langle A_\xi,h(L_{D_1})A_\xi\rangle=\left\langle \xi, h(D_1)\xi\right\rangle, $$
where the last $\langle\cdot,\cdot \rangle$ is the scalar product on $\com^n$. It is known that \cite[Theorem 3.1]{AH11log} the joint convexity of $\langle \xi, h(D_1)\xi\rangle$ in $(D_1,\xi)\in M^{++}_n(\com)\times \com^n$ implies that $h$ is operator monotone decreasing. See \cite[Theorem 1.45]{Carlen22review} for a short proof. 
\end{proof}

\end{document}